\theoremstyle{plain}
\newtheorem{thm}{\protect\theoremname}
\newtheorem{example}{Example}
\newcommand{\EC}[1]{#1}
\newcommand{\Ra}[1]{#1}
\newcommand{\Rb}[1]{#1}
\newcommand{\Rc}[1]{#1}
\newcommand{\Rd}[1]{#1}
\providecommand{\theoremname}{Theorem}
\begin{document}

\title{A Fisher Information Analysis of Joint Localization and Synchronization in Near Field}

\author{Henk Wymeersch\\
Department of Electrical Engineering \\Chalmers University of Technology \\e-mail: henkw@chalmers.se }
\maketitle
\begin{abstract}
In 5G communication, arrays are used for both positioning
and communication. As the arrays become larger, the far-field
assumption is increasingly being violated and curvature of the wavefront
should be taken into account. We explicitly contrast near-field and
far-field uplink localization performance in the presence of a clock bias from a Fisher information
perspective and show how a simple algorithm can provide a coarse estimate of 
a user's location and clock bias.
\end{abstract}

\section{Introduction}

Cellular localization has largely relied on measurements
of time-difference-of-arrival in either uplink or downlink.
Such measurements can cope with  the clock bias of the user, but
require multiple base stations (BSs) \cite{delPeralCOMST17}. With  5G, where large arrays are used to provide improved spectral efficiency,
angle measurements have become possible \cite{WymSecDesDarTuf:J18,buehrer2018collaborative}.
Estimating the angle-of-arrival (AOA) at different BSs,
the user's location is determined by a set of bearing lines, so that
localization can be performed without any stringent synchronization
requirements \cite{garcia2017direct}. When both user and base station
are equipped with large arrays, the user's position and orientation
can be inferred \cite{Shahmansoori2018}. 
Extensive studies have been performed to assess the \Rd{fundamental performance
of array-based positioning \cite{han2016performance,shen2010accuracy,Shen2010}
(through the Cram\'{e}r-Rao bound (CRB))}, as well as to develop practical
algorithms \cite{Shahmansoori2018,witrisal2016high}. All these works assume far-field propagation (plane wave assumption), where the user is far away from the BS. 
%
\EC{With communication systems beyond 5G targeting novel 
 technological enablers, such
as large intelligent surfaces and extreme aperture arrays \cite{bjornson2019massive}, 
the far-field propagation condition
may be violated, requiring us to revisit the models, performance characterization,
and algorithm design. Such activities have now started in communication
\cite{myers_message_2019,basar_wireless_2019} and radio localization
\cite{hu_beyond_2018} and are the main topic of this paper. } 

Near-field localization dates back  around 30
years in the context of source localization. The early work \cite{rockah1987array} studied
the impact of an imperfectly calibrated array on near-field source
localization and a calibration method was proposed, while
\cite{huang1991near} estimated the direction-of-arrival (DOA) of multiple
sources using the MUSIC algorithm and a maximum likelihood
(ML) approach. The latter was shown to be superior in low SNR conditions,
though comes at a significant complexity cost. In \cite{yuen_performance_1998},
an ESPRIT-based method was proposed and the performance was theoretically
determined. \EC{In \cite{gazzah2014crb}, a multi-source CRB was derived for stochastic sources, highlighting the benefits of centro-symmetric arrays. An overlapping sub-array approach was proposed in \cite{zhi2007near} for low complexity range and bearing estimation. }In \cite{el2010conditional}, time-varying
sources were studied in the narrowband regime in terms of the CRB. Range and bearing estimation were also treated in \cite{hu2014near}, based on sparse recovery techniques. 
\EC{The localization of near- and far-field sources was proposed in \cite{zuo2018subspace}.} A simplified CRB for near-field positioning was derived in \cite{Zhang19Spherical}, as well as an algorithm that  directly exploits the wavefront curvature for positioning. 
In contrast to the above works in the narrowband regime,
\cite{chen_maximum-likelihood_2002} considered spatial wideband signals \Rb{(where signals arrive at different antenna elements with different resolvable delays)} and derived the CRB and an ML estimator. The extension \cite{mada_efficient_2009}
relied on an expectation-maximization method, which is computationally
less demanding than ML.   Positioning using large intelligent
surface was considered in 
 \cite{hu_beyond_2018}, which showed that the CRB reduces quadratically in the size of the array. Most of these works rely on second-order statistics and are thus data-intensive. 

\EC{In this paper, in contrast to the above works, we consider not only the position but also the clock bias of the transmitter to be unknown. This leads us to investigate {joint localization and synchronization in near-field}. Moreover, we do not rely on second-order statistics and instead exploit the communication signal directly.  Our main contributions are: (i) a Fisher information analysis of uplink near-field joint localization
and synchronization with a linear array;  (ii) a simple joint localization and synchronization method using sub-array
processing.}

\section{System Model}
\begin{figure}
\begin{centering}
\includegraphics[width=0.8\columnwidth]{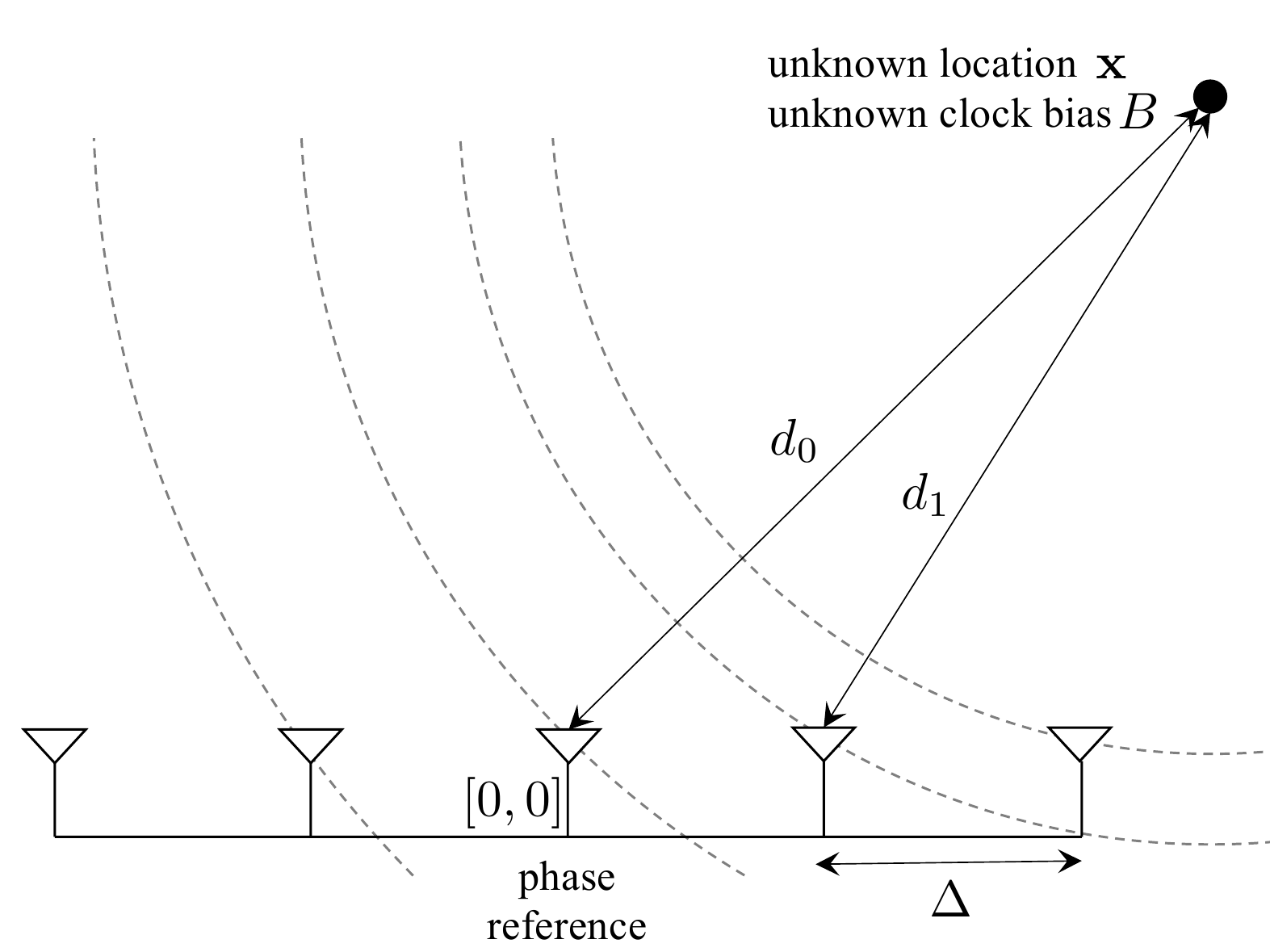}
\par\end{centering}
\caption{\label{fig:scenario}Scenario with a transmitting source and a receiver array. In near-field, the phase across the antenna elements changes nonlinearly. In spatial wide-band, the delay across the antenna elements changes noticeably. }
\end{figure}
We consider a 2D scenario with a single-antenna user equipment (UE) with unknown location $\mathbf{x}=[x,y]^{\text{T}}$
(or $[d,\theta]^{\text{T}}$ in polar coordinates, with $d=\Vert\mathbf{x}\Vert$
and $\theta=\arccos(x/\Vert\mathbf{x}\Vert)$) and a small BS (e.g., indoors in close proximity to the UE) with an $N+1$-element\footnote{The analysis can easily be modified for an array with $N$ elements,
with locations $\mathbf{x}_{n}=[\Delta/2+n\Delta,0]^{\text{T}}$,
for $n=-N/2,\ldots,N/2-1$. } linear array with element spacing $\Delta$, with locations $\mathbf{x}_{n}=[n\Delta,0]^{\text{T}},$
$n\in\{-N/2,\ldots,N/2\}$. The UE has an unknown clock bias $B$
(expressed in meters) and sends a known OFDM signal with transmit power
$P_{t}$ at a high carrier frequency $f_{c}$ (28 GHz or higher)
and a total bandwidth $W=(K+1)\Delta_{f}$, where $\Delta_{f}$ is
the subcarrier spacing and $K+1$ is the number of subcarriers. For
notational convenience, but without any loss of generality, we let
$k\in\{-K/2,\ldots,K/2\}$. We further introduce $d_{n}=\Vert\mathbf{x}-\mathbf{x}_{n}\Vert$ (so $d=d_0$)
and $\delta_{n}=\Vert\mathbf{x}-\mathbf{x}_{n}\Vert-B$, which allows
us to express the uplink signal observed on antenna $n$, subcarrier $k$:
\begin{align}
& y_{n}[k] =\label{eq:basicSignalModelOmega} \\ 
  & \alpha_{n}s[k]e^{-j\frac{2\pi}{\lambda}\xi_{n}[k]} +\sum_{l=1}^{L}\alpha_{n,l}s[k]e^{-j\frac{2\pi}{\lambda}\xi_{n,l}[k]}+w_{n}[k],\nonumber
\end{align}
where $w_{n}[k]$ complex
zero-mean Gaussian noise with variance $N_{0}/2$ per real dimension. Taking the phase at the center antenna ($N=0$) and center subcarrier $k=0$ as reference, i.e., $\xi_{0}[0]=0$, the phase at any antenna $n$ and any subcarrier $k$ is given by 
 \begin{align}
\xi_{n}[k]=(d_{n}-d_{0})+k\frac{\delta_{n}}{\left(K+1\right)T_{s}f_c}, \label{eq:signalPhaseGeneral}
\end{align}
in which
$T_{s}=1/W$. The first term is the $d_{n}-d_{0}$ is the difference in path length with respect to the center antenna, while the second term depends on the absolute delay $\delta_{n}$, and grows with the subcarrier index. 
\begin{example}
The phase $-2\pi \xi_{n}[k]/\lambda$ is shown in Fig.~\ref{fig:phasePlot}, for $\Delta=\lambda/2$, where the left figure shows the evolution of the phase with $n$ for various distances $d_0$ as a function of the antenna index $n$ and the right figure the evolution of the phase as a function of the subcarrier index $k$, for different antenna elements. We see from Fig.~\ref{fig:phasePlot} (left) that when $d_0$ is small, the phase exhibits a nonlinear behavior as a function of $n$, while Fig.~\ref{fig:phasePlot} (right) illustrates that for small $d_0$, different antenna elements see different absolute delays $\delta_{n}$ (note that this effect is only visible when we have a large bandwidth $W$). 
\end{example}

The complex channel gain at antenna $n$ is $\alpha_{n}=\rho_{n}e^{j\psi}$
with $\rho_{n}=\lambda/(2\pi d_{n})$ and $\psi=-2\pi d_{0}/\lambda$. Similarly, $\alpha_{n,l}$ is the complex gain of non-line-of-sight (NLOS) path $l$, $\xi_{n,l}[k]$ is a phase increasing with $k$ due to the delay of the NLOS path $l$.  
We will make several assumptions, in order to facilitate compact closed-form expressions: \EC{
we assume that $\alpha_{n}$ is not used directly for localization
(so it is treated as a separate unknown) \cite{Shahmansoori2018,win2018theoretical}; $|\alpha_{n}| \gg |\alpha_{n,l}|, \forall l$ so that  the line-of-sight (LOS) path is dominant  \cite{chen_maximum-likelihood_2002,el2010conditional,gazzah2014crb,myers_message_2019} (the robustness of the proposed method to multipath will be evaluated in Section \ref{sec:simsAlg})}; the transmitted signal spectrum is symmetric ($|s[k]|=|s[-k]|$).
\Rb{Our goal is to determine $\mathbf{x}$ and $B$ from the observation
$\mathbf{Y}\in\mathbb{C}^{(N+1)\times(K+1)}$, }
\Rd{though the proposed methods can be combined with tracking as in \cite{win2018theoretical} } to account for user mobility \Rd{and clock drift \cite{DarConFerGioWin:J09}. }

\subsubsection*{Terminology} 
\EC{
We distinguish  the following operating conditions. 
\begin{itemize}
\item \emph{Far-field vs near-field:} When $\Vert\mathbf{x}\Vert>2(N\Delta)^{2}/\lambda$,
the far-field regime with plane wave assumption holds. When $0.62 \sqrt{(N\Delta)^{3}/\lambda}< \Vert\mathbf{x}\Vert<2(N\Delta)^{2}/\lambda$, we operate in the radiative near-field zone, where wavefront curvature is non-negligible \cite{zuo2018subspace}. 
\item \emph{Narrowband vs wideband: }When $W  < c/(N\Delta)$, the signals
at the different antennas are not resolvable in the delay domain and communication is narrowband. When $W> c/(N\Delta)$, we consider the signals to be spatially wideband in the sense that they are  resolvable in the delay domain at different antennas \cite{chen_maximum-likelihood_2002}. 
\item \emph{Beam squint: } When $W> f_{c}/10$, the wavelength of the signal varies significantly over its bandwidth, leading to beam squint. Generally, beam squint  implies spatial wideband operation, but not vice versa.
\end{itemize}
Throughout this paper, we assume that the bandwidth is sufficiently small to ignore beam squint (i.e., $W\ll f_{c}$). }

\begin{figure}
\begin{centering}
\includegraphics[width=1\columnwidth]{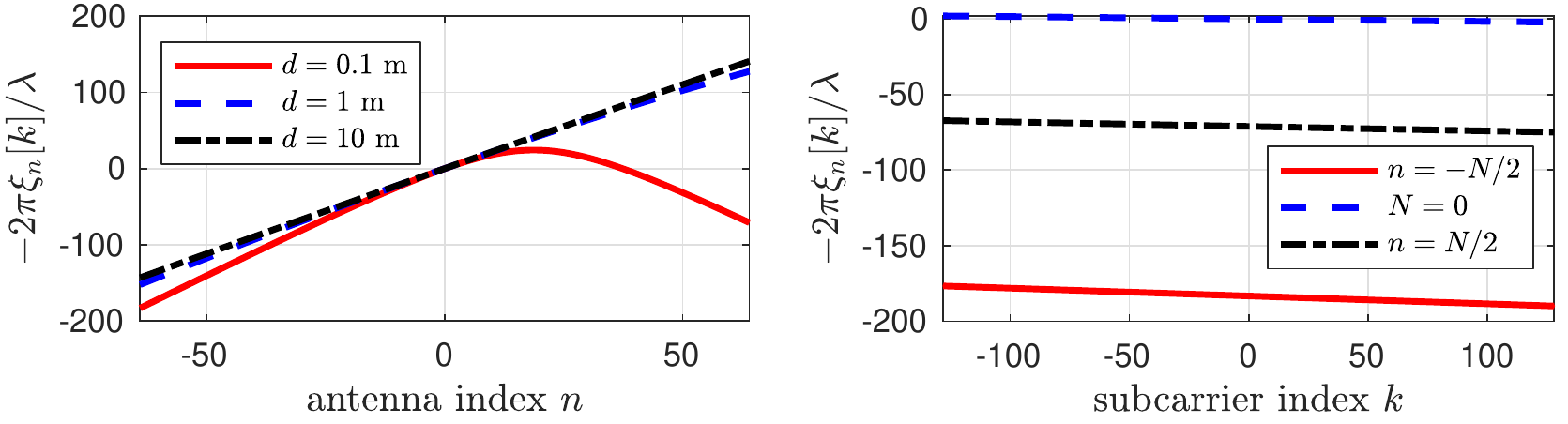}
\par\end{centering}
\caption{\label{fig:phasePlot}Left: Plot of the signal phase $-{2\pi}\xi_{n}[k]/{\lambda}$ as a function of the antenna index $n$ for different distances $d$ between transmitter and the phase reference of the receiver. Right: Plot of the signal phase $-{2\pi}\xi_{n}[k]/{\lambda}$ as a function of the subcarrier index $k$ for different antenna indices, for $d=0.1$ m. Parameters: $K=257$, $N=129$, $f_c=28~\text{GHz}$, $W=1.4~\text{GHz}$, $\Delta=\lambda/2$.}
\end{figure}

\section{Standard Far-Field and Narrowband Condition}

\subsection{Model and Fisher Information Matrix}

In far-field and narrowband condition, the 
standard model reverts to \cite{huang1991near,Shahmansoori2018}
\begin{align}
y_{n}[k] & =\alpha_0 s[k]e^{-j\frac{2\pi}{\lambda}\xi_{n}[k]}+w_{n}[k],\label{eq:SimplifiedModel}
\end{align}
where \Ra{
\begin{equation}
\xi_{n}[k]=-n\Delta\cos\theta-k(d-B)r_{f},\label{eq:rotationCase1}
\end{equation}}
with $r_{f}=\Delta_{f}/f_{c}$. This model is derived by taking a first-order Taylor series expansion  of $d_{n}-d_{0}$ around $n\Delta/d_{0}=0$ \cite{huang1991near} in combination with $\delta_{n}\approx\delta_{0}$
and  $\alpha_{n} \approx \alpha_{0}$.
%
The Fisher information matrix (FIM) of $\bm{\eta}=[\rho,\psi,d,\theta,B]^{\text{T}}$
is composed of the sum of the FIM for each subcarrier and each antenna
$\mathbf{J}^{\text{S}}(\bm{\eta})=\sum_{n=-N/2}^{N/2}\sum_{k=-K/2}^{K/2}\mathbf{J}_{n}[k]$
where \cite{Kay1993}
\begin{align}
 & \mathbf{J}_{n}[k]=\label{eq:FIMexpression}\\
 & \frac{1}{N_{0}}\vert s[k]\vert^{2}\Re\left\{ \nabla_{\bm{\eta}}^{\text{H}}(\alpha_0 e^{-j\frac{2\pi}{\lambda}\xi_{n}[k]})\nabla_{\bm{\eta}}(\alpha_0 e^{-j\frac{2\pi}{\lambda}\xi_{n}[k]})\right\} \nonumber 
\end{align}
in which the derivatives are given by 
\begin{equation}
\nabla_{\bm{\eta}}^{\text{T}}(\alpha_0 e^{-j\frac{2\pi}{\lambda}\xi_{n}[k]})=e^{-j\frac{2\pi}{\lambda}\xi_{n}[k]}\left[\begin{array}{c}
e^{j\psi}\\
j\alpha_0\\
-j\frac{2\pi}{\lambda}\alpha_0\frac{\partial\xi_{n}[k]}{\partial d}\\
-j\frac{2\pi}{\lambda}\alpha_0\frac{\partial\xi_{n}[k]}{\partial\theta}\\
-j\frac{2\pi}{\lambda}\alpha_0\frac{\partial\xi_{n}[k]}{\partial B}
\end{array}\right],\label{eq:derivatives}
\end{equation}
\Ra{where $\partial\xi_{n}[k]/\partial d=-kr_{f}$, $\partial\xi_{n}[k]/\partial\theta=n\Delta\sin\theta$,
$\partial\xi_{n}[k]/\partial B=kr_{f}$.} We find that, since $\Re\{j\}=0$,
$J_{1,i\neq1}=0$ (here $J_{i,i'}$ refers the entry in $\mathbf{J}(\bm{\eta})$
on row $i$, column $i'$). Hence, we can ignore $\rho$ when determining
the FIM of $\bm{\eta}=[\psi,d,\theta,B]^{\text{T}}$. We introduce
$\mathbf{e}_{i}$ as an all-zero vector with a 1 on index $i$, $\mathbf{b}=[1,0,-1]^{\text{T}}$,
and $\gamma=|\alpha_0|^{2}(2\pi/\lambda)^{2}/N_{0}$. Then 
\begin{align}
 & \mathbf{J}^{\text{S}}(\bm{\eta})=\gamma\mathbf{J}_{1}^{\text{S}}+\gamma\mathbf{J}_{2}^{\text{S}}+\gamma\mathbf{J}_{3}^{\text{S}},\label{eq:FIMstandardCase}
\end{align}
where
\begin{align}
\mathbf{J}_{1}^{\text{S}} & =\left(\frac{\lambda}{2\pi}\right)^{2}E_{K,0}E_{N,0}\mathbf{e}_{1}\mathbf{e}_{1}^{\text{T}}\\
\mathbf{J}_{2}^{\text{S}} & =E_{K,2}E_{N,0}r_{f}^{2}\left[\begin{array}{cc}
0 & \mathbf{b}^{\text{T}}\\
\mathbf{b} & \mathbf{0}_{3\times3}
\end{array}\right]\\
\mathbf{J}_{3}^{\text{S}} & =E_{K,0}E_{N,2}\Delta^{2}\sin^2\theta \,\mathbf{e}_{3}\mathbf{e}_{3}^{\text{T}},
\end{align}
in which $E_{K,i}$ $=$ $\sum_{k=-K/2}^{K/2}k^{i}|s[k]|^{2}$ and $E_{N,i}=\sum_{n=-N/2}^{N/2}n^{i}$.
The directions in which we obtain information are \emph{radially} (along
the line from the center of the BS array to the UE) and \emph{tangentially}
(orthogonal to the line between BS array center and UE). 
Transformation to the position domain is achieved as follows. With
$x=d\cos\theta$ and $y=d\sin\theta$, the FIM of $\mathbf{J}(\mathbf{x},B)$
is given by $\mathbf{J}(\psi,\mathbf{x},B)=\mathbf{T}^{\text{T}}\mathbf{J}(\bm{\eta})\mathbf{T}$
with Jacobian $\mathbf{T}$.\footnote{The Jacobian is given by $\mathbf{T}=$$[\mathbf{e}_{1}^{\text{T}};0,\,x/d,\,y/d,\,0;0,\,-y/d^{2},\,x/d^{2},\,0;\mathbf{e}_{4}^{\text{T}}]$
in which the ``$;$'' operator separates rows in a matrix. } Since $\psi$ does not depend on the other parameters, 
\begin{align}
 & \mathbf{J}(\mathbf{x},B)=\label{eq:FIMinPositionDomain}\\
 & \gamma E_{K,2}E_{N,0}r_{f}^{2}\mathbf{e}_{\mathbf{x}}\mathbf{e}_{\mathbf{x}}^{\text{T}}+\gamma E_{K,0}E_{N,2}\Delta^{2}y^{2}\frac{1}{\Vert\mathbf{x}\Vert^{4}}\mathbf{e}_{\mathbf{x},\perp}\mathbf{e}_{\mathbf{x},\perp}^{\text{T}}.\nonumber 
\end{align}
where $\mathbf{e}_{\mathbf{x}}=\left[x/d,\,y/d,\,1\right]^{\text{T}}$
and $\mathbf{e}_{\mathbf{x},\perp}=\left[-y/d,\,x/d,\,0\right]^{\text{T}}.$
Since $\mathbf{e}_{\mathbf{x}}$ is orthogonal to $\mathbf{e}_{\mathbf{x},\perp}$,
this decomposition shows that delay estimation provides radial information
with intensity $\gamma E_{K,2}E_{N,0}r_{f}^{2}$ and AOA estimation
provides tangential information with location-dependent intensity
$\gamma E_{K,0}E_{N,2}\Delta^{2}y^{2}/\Vert\mathbf{x}\Vert^{4}$.
Hence, AOA information is only useful for short distances. Moreover, the matrix $\mathbf{J}(\mathbf{x},B)$ is rank 2, so $B$
 and $\mathbf{x}$ are not identifiable.

\subsection{Localization Algorithm\label{subsec:Algorithm}}

We organize the observations $y_n[k]$ in a $(N+1)\times(K+1)$ matrix:
\begin{equation}
\mathbf{Y}=\alpha_0\,\mathbf{a}_{N+1}(\cos\theta)\mathbf{a}_{K+1}^{\text{H}}(\delta_{0}r_{f})\mathbf{S}+\mathbf{W}\label{eq:standardCaseVector}
\end{equation}
where $\mathbf{S}$ is a diagonal matrix containing the $K+1$ pilot symbols
and $\mathbf{a}_{M+1}(\cdot)$ is a vector of length $M+1$ with entries
$[\mathbf{a}_{M+1}(\beta)]_{m}=\exp\left(j2\pi\beta m/(M+1)\right),$
for $m=-M/2,\ldots,M/2.$ Similar to \cite{Shahmansoori2018}, we exploit the sparse nature of
$\mathbf{Y}$ by applying a 2D-FFT 
\begin{align}
\mathbf{Z}=\mathbf{F}_{N+1}\mathbf{Y}\mathbf{S}^{\text{H}}(\mathbf{S}\mathbf{S}^{\text{H}})^{-1}\mathbf{F}_{K+1}
\end{align}
to the observation $\mathbf{Y}\mathbf{S}^{\text{H}}(\mathbf{S}\mathbf{S}^{\text{H}})^{-1}$,
where the impact of the pilot symbols has been removed. \Rb{This also allows multiple users to be treated independently if there is no pilot contamination \cite{Rusek_MSP2013}. } Here, 
$\mathbf{F}_{M}$ denotes the $M\times M$ discrete Fourier transform
matrix. Higher accuracy can be achieved by zero-padding $\mathbf{Y}$
and applying larger FFT matrices. The peak of $|\mathbf{Z}|$ directly
provides us an estimate of $\cos\theta$ and $\delta_{0}r_{f}$. As
indicated by the FIM, the parameters are not identifiable, so we can
only localize the user when the bias $B$ is known. The complexity
of this method is of order $\mathcal{O}(NK\log KN)$. 

\section{Near-Field and Spatial Wideband Conditions}

\subsection{Narrowband Near-field Model and FIM}
\Rb{When signals arriving at different antennas are not resolvable in delay, the model \eqref{eq:basicSignalModelOmega} applies
with 
 (\ref{eq:signalPhaseGeneral}) specialized to}
\begin{align}
\xi_{n}[k] & =d_{n}+(kr_{f}-1)d-kr_{f}B.\label{eq:NarrowbandNearField}
\end{align}
We can now state the following result. 
\begin{thm}
In the case of narrowband near-field operation, the FIM of the parameter
$\bm{\eta}=[\psi,d,\theta,B]^{\text{T}}$ is
\[
\mathbf{J}^{\text{N}}(\bm{\eta})=\gamma\frac{A_{0}^{(0)}}{E_{N,0}}\mathbf{J}_{1}^{\text{S}}+\gamma\frac{A_{0}^{(0)}}{E_{N,0}}\mathbf{J}_{2}^{\text{S}}+\gamma\frac{A_{2}^{(2)}}{E_{N,2}}\mathbf{J}_{3}^{\text{S}}+\gamma\mathbf{J}_{4}^{\text{N}}+\gamma\mathbf{J}_{5}^{\text{N}},
\]
where $A_{i}^{(j)}=\sum_{n}\left({d}/{d_{n}}\right)^{j+2}n^{-i}$ and 
\[
\mathbf{J}_{4}^{\text{N}}=\frac{\lambda}{2\pi}E_{K,0}\left[\begin{array}{cc}
0 & \mathbf{j}^{\text{T}}\\
\mathbf{j} & \mathbf{0}_{3\times3}
\end{array}\right]
\]
\[
\mathbf{J}_{5}^{\text{N}}=E_{K,0}\left[\begin{array}{ccc}
0 & \mathbf{0}^{\text{T}} & 0\\
\mathbf{0} & \mathbf{C} & \mathbf{0}\\
0 & \mathbf{0}^{\text{T}} & 0
\end{array}\right]
\]
with $\mathbf{j}=[-\frac{\Delta}{d}\cos\theta A_{1}^{(1)}+A_{0}^{(1)}-A_{0}^{(0)},\,A_{1}^{(1)}\Delta\sin\theta,\,0]^{\text{T}}$,
$C_{1,1}=A_{0}^{(0)}+A_{0}^{(2)}-2(\frac{\Delta}{d}\cos\theta A_{1}^{(2)}+A_{0}^{(1)}-\frac{\Delta}{d}\cos\theta A_{1}^{(1)})+\frac{\Delta^{2}}{d^{2}}A_{2}^{(2)}\cos^{2}\theta$,
$C_{1,2}=C_{2,1}=\Delta\sin\theta A_{1}^{(2)}-\Delta\sin\theta A_{1}^{(1)}$,
$C_{2,2}=0$. 
\end{thm}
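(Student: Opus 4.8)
The plan is to reuse the general FIM formula \eqref{eq:FIMexpression} verbatim, substituting the near-field phase \eqref{eq:NarrowbandNearField} (and the spherical amplitude taper), and then to push the double sum over $(n,k)$ through in two stages: first over $k$, where the symmetry of the spectrum collapses most terms, and then over $n$, where the surviving geometric sums are exactly the quantities $A_{i}^{(j)}$. First I would write the noise-free observation as $f_{n}[k]=\rho_{n}e^{j\psi}e^{-j\frac{2\pi}{\lambda}\xi_{n}[k]}$ with $\rho_{n}=\lambda/(2\pi d_{n})$ and take the gradient with respect to $\bm{\eta}=[\psi,d,\theta,B]^{\text{T}}$. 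Writing $\nabla_{\bm\eta}f_{n}[k]=f_{n}[k]\,\mathbf{c}_{n}[k]$, the logarithmic-derivative vector $\mathbf{c}_{n}[k]$ has a real part from $\partial_{\bm\eta}\log\rho_{n}$ and an imaginary part $-j\frac{2\pi}{\lambda}\partial_{\bm\eta}\xi_{n}[k]$ from the phase. Since $|f_{n}[k]|^{2}=\rho_{n}^{2}=|\alpha_0|^{2}(d/d_{n})^{2}$, the rank-one contribution \eqref{eq:FIMexpression} becomes $\mathbf{J}_{n}[k]=\gamma(\lambda/2\pi)^{2}|s[k]|^{2}(d/d_{n})^{2}\,\Re\{\overline{\mathbf{c}_{n}[k]}\,\mathbf{c}_{n}[k]^{\text{T}}\}$, so the only effect of the spherical taper is the known per-antenna power weighting $(d/d_{n})^{2}$. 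The real (amplitude) part of $\mathbf{c}_{n}[k]$ enters $\Re\{\cdot\}$ only at order $(\lambda/d_{n})^{2}$ and, consistent with treating the gain as a nuisance, I would neglect it, so that the phase supplies all geometric information.

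Next I would compute the three phase derivatives by differentiating $d_{n}=\Vert\mathbf{x}-\mathbf{x}_{n}\Vert$ in polar coordinates, obtaining $\partial\xi_{n}[k]/\partial d=(d-n\Delta\cos\theta)/d_{n}+kr_{f}-1$, $\partial\xi_{n}[k]/\partial\theta=dn\Delta\sin\theta/d_{n}$, and $\partial\xi_{n}[k]/\partial B=-kr_{f}$. The essential structural point, in contrast to the far-field phase \eqref{eq:rotationCase1}, is that the range and angle derivatives now carry $k$-independent curvature terms $p_{n}=(d-n\Delta\cos\theta)/d_{n}$ and $q_{n}=dn\Delta\sin\theta/d_{n}$ that do \emph{not} vanish when averaged over subcarriers. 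Summing $\mathbf{J}_{n}[k]$ over $k$ and using the spectral symmetry $|s[k]|=|s[-k]|$ (so that $E_{K,1}=0$ and all odd moments vanish) cleanly splits every entry into a delay part proportional to $r_{f}^{2}E_{K,2}$ and a curvature part proportional to $E_{K,0}$.

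Finally I would carry out the sum over $n$. The $(\psi,\psi)$ term and the delay terms reproduce the far-field blocks $\mathbf{J}_{1}^{\text{S}}$ and $\mathbf{J}_{2}^{\text{S}}$ but with $E_{N,0}$ replaced by $A_{0}^{(0)}=\sum_{n}(d/d_{n})^{2}$ (hence the prefactor $A_{0}^{(0)}/E_{N,0}$), while the tangential term reproduces $\mathbf{J}_{3}^{\text{S}}$ with $E_{N,2}$ replaced by $A_{2}^{(2)}$. The genuinely new contributions are the $E_{K,0}$ curvature terms: the cross terms between $\psi$ and $(d,\theta)$ assemble into $\mathbf{J}_{4}^{\text{N}}$ through $\mathbf{j}$, and the curvature part of the $(d,\theta)$ block assembles into $\mathbf{J}_{5}^{\text{N}}$ through $\mathbf{C}$. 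To reach the compact forms I would expand $(1-p_{n})^{2}$, $q_{n}(1-p_{n})$, and $q_{n}^{2}$, and repeatedly apply the identity $(d/d_{n})^{2}\,n^{i}/d_{n}^{\,m}=d^{-m}(d/d_{n})^{2+m}n^{i}$ to recognize each resulting monomial as an $A_{i}^{(j)}$.

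The main obstacle I anticipate is organizational rather than conceptual: collapsing the roughly dozen geometric sums produced by these expansions into the six quantities $A_{0}^{(0)},A_{0}^{(1)},A_{0}^{(2)},A_{1}^{(1)},A_{1}^{(2)},A_{2}^{(2)}$ with the correct powers of $\Delta\cos\theta/d$ and $\Delta\sin\theta/d$, and separating cleanly the combinations that merely rescale the far-field blocks from those that form the new couplings $\mathbf{J}_{4}^{\text{N}}$ and $\mathbf{J}_{5}^{\text{N}}$. Sign tracking in the $\psi$-cross terms (the entries of $\mathbf{j}$), confirming that the entire $(\theta,\theta)$ information is absorbed into the rescaled $\mathbf{J}_{3}^{\text{S}}$ term so that $C_{2,2}=0$, and verifying that the $\psi$--$B$ coupling (the third entry of $\mathbf{j}$) vanishes because $\partial\xi_{n}[k]/\partial B$ is odd in $k$, are the places where I would be most careful.
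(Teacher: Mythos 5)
Your proposal is correct and takes essentially the same route as the paper's proof: the paper likewise computes the three phase derivatives $\partial\xi_{n}[k]/\partial d=(d-n\Delta\cos\theta)/d_{n}-1+kr_{f}$, $\partial\xi_{n}[k]/\partial\theta=dn\Delta\sin\theta/d_{n}$, $\partial\xi_{n}[k]/\partial B=-kr_{f}$ (identical to yours) and substitutes them into \eqref{eq:derivatives} and \eqref{eq:FIMexpression}, summing over $n$ and $k$. Your extra bookkeeping --- the $(d/d_{n})^{2}$ amplitude weighting that produces the higher powers in $A_{i}^{(j)}$, the even/odd split over $k$ via $E_{K,1}=0$, and the collapse of the antenna sums into the $A_{i}^{(j)}$ --- is precisely the computation the paper's one-line proof leaves implicit.
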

\begin{proof}
We readily find that $\partial\xi_{n}[k]/\partial d=(d-n\Delta\cos\theta)/d_{n}-1+kr_{f}$,
$\partial\xi_{n}[k]/\partial\theta=dn\Delta\sin\theta/d_{n}$,
$\partial\xi_{n}[k]/\partial B=-kr_{f}$. Substituting these derivatives in \eqref{eq:derivatives} and then in \eqref{eq:FIMexpression}, we obtain the desired result. 
\end{proof}

We observe that the first 3 components are similar to those in the
standard case (\ref{eq:FIMstandardCase}), up to a scaling. On the
other hand, $\mathbf{J}_{4}^{\text{N}}$ and $\mathbf{J}_{5}^{\text{N}}$
are due to the near-field propagation. In particular, $\mathbf{J}_{4}^{\text{N}}$
couples the channel phase $\psi$ with the UE distance $d$ and the AOA $\theta$. The diagonal
element $C_{1,1}$ in $\mathbf{J}_{5}^{\text{N}}$ provides additional
information on the distance, which allows $\mathbf{J}^{\text{N}}(\bm{\eta})$
to become full rank. This information is due to the dependence of
the curvature on the UE location, but not on the bias.  

\subsection{Spatial Wideband Far-field Model and FIM}

\Rb{Under spatial wideband far-field communication, the model \eqref{eq:basicSignalModelOmega} applies
with 
 (\ref{eq:signalPhaseGeneral}) specialized to}
\begin{align}
\xi_{n}[k] & =-n\Delta\cos\theta+k(d_{n}-B)r_{f}.\label{eq:WidebandFarField}
\end{align}
We can then state the following result. 
\begin{thm}
In the case of spatial wideband far-field operation, the FIM of the parameter
$\bm{\eta}=[\psi,d,\theta,B]^{\text{T}}$ is
\[
\mathbf{J}^{\text{W}}(\bm{\eta})=\gamma\frac{A_{0}^{(0)}}{E_{N,0}}\mathbf{J}_{1}^{\text{S}}+\gamma\mathbf{J}_{2}^{\text{W}}+\gamma\frac{A_{2}^{(2)}}{E_{N,2}}\mathbf{J}_{3}^{\text{S}}+\gamma\mathbf{J}_{4}^{\text{W}},
\]
where 
\begin{align*}
\mathbf{J}_{2}^{\text{W}} & =E_{K,2}r_{f}^{2}\left[\begin{array}{cccc}
0 & 0 & 0 & 0\\
0 & A_{0}^{(2)} & 0 & -A_{0}^{(1)}\\
0 & 0 & 0 & 0\\
0 & -A_{0}^{(1)} & 0 & A_{0}^{(0)}
\end{array}\right]
\end{align*}
\[
\mathbf{J}_{4}^{\text{W}}=E_{K,2}r_{f}^{2}\frac{\Delta\cos\theta}{d}\left[\begin{array}{cccc}
0 & 0 & 0 & 0\\
0 & A_{2}^{(2)}\frac{\Delta}{d}\cos\theta-2A_{1}^{(2)} & 0 & A_{1}^{(1)}\\
0 & 0 & 0 & 0\\
0 & A_{1}^{(1)} & 0 & 0
\end{array}\right]
\]
\end{thm}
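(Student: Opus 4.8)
The plan is to mirror the derivation of the first theorem exactly, since the spatial wideband far-field case differs from the narrowband near-field case only in the form of the phase $\xi_n[k]$ given in \eqref{eq:WidebandFarField}. First I would compute the three partial derivatives of $\xi_n[k]=-n\Delta\cos\theta+k(d_n-B)r_f$ with respect to the estimation parameters. Using $\partial d_n/\partial d = (d-n\Delta\cos\theta)/d_n$ and $\partial d_n/\partial\theta = dn\Delta\sin\theta/d_n$ from the geometry of $d_n=\Vert\mathbf{x}-\mathbf{x}_n\Vert$, I expect
\[
\frac{\partial\xi_n[k]}{\partial d}=kr_f\frac{d-n\Delta\cos\theta}{d_n},\quad
\frac{\partial\xi_n[k]}{\partial\theta}=kr_f\frac{dn\Delta\sin\theta}{d_n},\quad
\frac{\partial\xi_n[k]}{\partial B}=-kr_f.
\]
Note that $\psi$ still enters only through the amplitude term $\alpha_0 e^{j\psi}$, so the $\rho$-decoupling argument from the standard case carries over and lets me work with the reduced parameter vector $\bm{\eta}=[\psi,d,\theta,B]^{\text{T}}$.

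Next I would substitute these derivatives into the gradient \eqref{eq:derivatives} and form the rank-one outer product $\Re\{\nabla^{\text{H}}\nabla\}$ appearing in \eqref{eq:FIMexpression}, then sum over $n$ and $k$ as in $\mathbf{J}^{\text{S}}(\bm\eta)=\sum_{n,k}\mathbf{J}_n[k]$. The key bookkeeping device is that every term now carries a factor $k^2$ (from the product of two $kr_f$ factors) or a factor independent of $k$ (from the $\psi$ entry and the $-n\Delta\cos\theta$ part of $\xi_n$), so the subcarrier sums collapse to $E_{K,0}$ or $E_{K,2}$, while the antenna sums over $(d/d_n)^{j+2}n^{-i}$ assemble into the quantities $A_i^{(j)}$ defined in the previous theorem. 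The $\psi\!-\!\psi$ entry reproduces $\mathbf{J}_1^{\text{S}}$ scaled by $A_0^{(0)}/E_{N,0}$, and the pure tangential AOA contribution reproduces $\mathbf{J}_3^{\text{S}}$ scaled by $A_2^{(2)}/E_{N,2}$, just as asserted.

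The crux is organizing the $d$, $\theta$, $B$ block. I would expand the products of the three $kr_f$-weighted derivatives, using $\partial\xi_n/\partial B=-kr_f$ to generate the cross terms $-A_0^{(1)}$ and $A_0^{(0)}$ in the $(d,B)$ and $(B,B)$ slots of $\mathbf{J}_2^{\text{W}}$. The near-field correction $\mathbf{J}_4^{\text{W}}$ arises precisely from writing $(d-n\Delta\cos\theta)/d_n$ as $d/d_n$ minus $(n\Delta\cos\theta)/d_n$, so that squaring it and cross-multiplying with $\partial\xi_n/\partial B$ and $\partial\xi_n/\partial\theta$ produces the $\Delta\cos\theta/d$-prefactored terms $A_2^{(2)}\frac{\Delta}{d}\cos\theta-2A_1^{(2)}$ and $A_1^{(1)}$. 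The main obstacle will be tracking these $A_i^{(j)}$ indices carefully through the expansion and the division by $d_n$: one must resist the far-field temptation to set $d_n\approx d$, since retaining the exact $d/d_n$ ratios is exactly what distinguishes $\mathbf{J}^{\text{W}}$ from the standard $\mathbf{J}^{\text{S}}$. Once the derivatives and the index-matching are correct, substituting into \eqref{eq:FIMexpression} and collecting terms yields the stated decomposition directly.
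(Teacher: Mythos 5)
Your overall strategy matches the paper's proof exactly: compute the three partial derivatives of $\xi_n[k]$ from \eqref{eq:WidebandFarField}, insert them into \eqref{eq:derivatives} and then \eqref{eq:FIMexpression}, and collect the sums over $n$ and $k$ into the $E_{K,i}$ and $A_i^{(j)}$ quantities. Your $d$- and $B$-derivatives also agree with the paper's. However, your $\theta$-derivative is wrong, and the error is fatal. Differentiating $\xi_n[k]=-n\Delta\cos\theta+k(d_n-B)r_f$ with respect to $\theta$ gives
\begin{equation*}
\frac{\partial\xi_n[k]}{\partial\theta}=n\Delta\sin\theta+kr_f\,\frac{d\,n\Delta\sin\theta}{d_n},
\end{equation*}
i.e., the explicit $-n\Delta\cos\theta$ term must be differentiated as well. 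You kept only the chain-rule contribution through $d_n$ and dropped $n\Delta\sin\theta$, which is the dominant piece: the term you retained is smaller by a factor $|k|r_f\le W/(2f_c)\ll 1$ (the no-beam-squint assumption), and it is precisely the part that is negligible and discarded; the paper's proof uses $\partial\xi_n[k]/\partial\theta=n\Delta\sin\theta$.

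This is not a harmless bookkeeping slip, because the structure of $\mathbf{J}^{\text{W}}$ hinges on the $k$-dependence of the derivatives. With the paper's derivatives, $\partial\xi_n/\partial d$ and $\partial\xi_n/\partial B$ are proportional to $k$ while $\partial\xi_n/\partial\theta$ is independent of $k$; hence the $(d,\theta)$ and $(\theta,B)$ cross terms are proportional to $\sum_k k|s[k]|^2=E_{K,1}=0$, which explains the zeros in $\mathbf{J}_2^{\text{W}}$ and $\mathbf{J}_4^{\text{W}}$, and the $(\theta,\theta)$ entry is proportional to $E_{K,0}$, which is what produces the term $\gamma\frac{A_2^{(2)}}{E_{N,2}}\mathbf{J}_3^{\text{S}}$. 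With your derivative, all three derivatives carry a factor $k$: the AOA information would collapse to an $E_{K,2}r_f^2$-scaled quantity (essentially negligible and not of the stated form), and the $(d,\theta)$ and $(\theta,B)$ entries would become nonzero $E_{K,2}$-terms, contradicting the sparsity pattern of the very theorem you are proving. So carrying your computation through consistently would not yield the stated decomposition; in particular, your claim that the tangential contribution ``reproduces $\mathbf{J}_3^{\text{S}}$ scaled by $A_2^{(2)}/E_{N,2}$'' is incompatible with your own derivative. Correcting $\partial\xi_n[k]/\partial\theta$ to $n\Delta\sin\theta$ (with the $O(kr_f)$ correction neglected) repairs the argument and makes it identical to the paper's proof.
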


\begin{proof}
We readily find that  $\partial\xi_{n}[k]/\partial d=kr_{f}(d-n\Delta\cos\theta)/d_{n}$,
$\partial\xi_{n}[k]/\partial\theta=n\Delta\sin\theta$, $\partial\xi_{n}[k]/\partial B=-kr_{f}$. Substituting these derivatives in \eqref{eq:derivatives} and then in \eqref{eq:FIMexpression}, we obtain the desired result. 
\end{proof}
We observe that the radial information in $\mathbf{J}_{2}^{\text{W}}$
is now scaled and that there is an additional term $\mathbf{J}_{4}^{\text{W}}$
that provides distance information with positive information $E_{K,2}r_{f}^{2}A_{2}^{(2)}\Delta^{2}\cos^{2}\theta/d^{2}$,
which is important for large $\Delta/d$. The information is larger
near the end-fire ($\theta\approx0$), as this is where the delay
spread is maximized. Note that the amount of information due to large
bandwidth is generally less than the amount of information due to
near-field. 

\subsection{Localization and Synchronization Algorithm\label{subsec:Algorithm-1}}
From the Fisher information analysis, we find that for small $\Delta$ (e.g., $\Delta = \lambda/2$) the amount of information increase due to near-field is more pronounced than due to spatial wideband. This is also confirmed by Fig.~\ref{fig:phasePlot}, where the nonlinear curvature of the phase across antennas (left Figure) is more significant than the difference in slope for different antennas (right Figure).  Based on this finding, we focus on the near-field case. 

The observation model in near-field is no longer of the form (\ref{eq:standardCaseVector})
so that a 2D-FFT will lead to multiple peaks. \Rc{A pure maximum likelihood
approach can be formulated, but leads to many local optima. Instead,
we propose to extend the method from Section \ref{subsec:Algorithm} with a simple  sub-arrays approach as in \cite{zhi2007near,myers_message_2019}, without aiming for an optimal solution.}
We divide the array (equivalently, the rows of $\mathbf{Y}\mathbf{S}^{\text{H}}(\mathbf{S}\mathbf{S}^{\text{H}})^{-1}$)
into non-overlapping  sub-arrays with $\tilde{N}$ elements.\footnote{Sub-array $\tilde{n}$ corresponding
to the observations at antenna $(\tilde{n}-1)\tilde{N}+1$ through
$\tilde{n}\tilde{N}$, with array center $\tilde{\mathbf{x}}_{\tilde{n}}=\mathbf{x}_{-N/2}+[\Delta((\tilde{n}-1)\tilde{N}+1+\tilde{N}/2,\,0]^{\text{T}}$. Here indexing $\tilde{n}$ starts at $1$. } The value of $\tilde{N}$
should be chosen to satisfy the following conditions: 
(i) \emph{far-field condition: }$\tilde{N}\le\sqrt{\bar{d}\lambda}/(2\Delta^{2})$,
so that the far-field assumption is valid per sub-array (here, $\bar{d}$
is an expected distance to the UE);
(ii) \emph{narrowband condition: }$\tilde{N}\ll c/(W\Delta)$, so that
paths are unresolved in the delay domain per sub-array. 
With these conditions (and ensuring that $\tilde{N}\ge1$), the method
from Section \ref{subsec:Algorithm} can be applied to each sub-array,
providing $\lfloor(N+1)/\tilde{N}\rfloor$ estimates 
\begin{align}
\hat{\theta}_{\tilde{n}} & =\arccos\left(\frac{x-\tilde{x}_{\tilde{n}}}{\Vert\mathbf{x}-\tilde{\mathbf{x}}_{\tilde{n}}\Vert}\right)+w_{\theta,\tilde{n}}, \label{eq:SUtheta}\\
\hat{\delta}_{\tilde{n}} & =\Vert\mathbf{x}-\tilde{\mathbf{x}}_{\tilde{n}}\Vert-B+w_{\delta,\tilde{n}} \label{eq:SUdelta},
\end{align}
where $w_{\theta,\tilde{n}}$ and $w_{\delta,\tilde{n}}$ are measurement
errors with variances $\sigma^2_{\theta,\tilde{n}}$ and   $\sigma^2_{\delta,\tilde{n}}$, due to the background noise and the finite resolution of the
FFTs. From these sub-array estimates, we can recover the UE position by intersecting the bearing lines and then the clock bias from the delay estimates. The complete procedure can be found in Algorithm \ref{alg:Alg1}. 
The complexity of the method is of order $\mathcal{O}(NK\log\tilde{N}K)$.
\begin{algorithm} 
\caption{Sub-array Localization and Synchronization}\label{alg:SLOS} \begin{algorithmic}[1] \Procedure{Localize-Near-Field}{$\mathbf{Y}$}
\label{alg:Alg1}
\State Determine $\tilde{N} \in \mathbb{N}_{\ge1}$: 
\begin{align*}
\tilde{N}\le\sqrt{\bar{d}\lambda}/(2\Delta^{2})~~\text{and}~~\tilde{N}\ll c/(W\Delta)
\end{align*}
\State Partition rows of $\mathbf{Y}$ into blocks of size $\tilde{N}$
\For{$\tilde{n}=1:\lfloor N+1/\tilde{N} \rfloor$}
\State Denote block $\tilde{n}$ by $\mathbf{Y}_{\tilde{n}}$
\State Estimate $\theta_{\tilde{n}}$ and $\delta_{\tilde{n}}$ from $\mathbf{Y}_{\tilde{n}}$ as in Section \ref{subsec:Algorithm}
\EndFor
\State Solve for $\mathbf{x}$:
\begin{align*}
\hat{\mathbf{x}}=\arg \min_{\mathbf{x}} \sum_{\tilde{n}=1}^{\lfloor N+1/\tilde{N} \rfloor} \frac{\left( \hat{\theta}_{\tilde{n}}-\arccos\left(\frac{x-\tilde{x}_{\tilde{n}}}{\Vert\mathbf{x}-\tilde{\mathbf{x}}_{\tilde{n}}\Vert}\right)\right)^2}{2 \sigma_{\theta,\tilde{n}}^2} 
\end{align*}
\State Solve for $B$:
\begin{align}
\hat{B} = \arg \min_{B} \sum_{\tilde{n}=1}^{\lfloor N+1/\tilde{N} \rfloor} \frac{\left( \hat{\delta}_{\tilde{n}}-\Vert\hat{\mathbf{x}}-\tilde{\mathbf{x}}_{\tilde{n}}\Vert-B)\right)^2}{2 \sigma_{\delta,\tilde{n}}^2} 
\end{align}
\State \textbf{return} $\hat{\mathbf{x}},\hat{B}$
\EndProcedure
\end{algorithmic}
\end{algorithm}

Note that in the narrowband far-field regime, $\tilde{N}=N+1$, so
that the method reverts to the one from Section \ref{subsec:Algorithm}.

\section{Numerical Results}\label{sec:NumericalResults}


We consider a nominal scenario at a carrier $f_{c}$ of 28 GHz ($\lambda\approx1.07$
cm), a bandwidth $W$ of 100 MHz, $c=0.3$ m/ns, $N_{0}=4.0049\times10^{-9}$
mW/GHz, a transmit power $P_{t}$ of 1 mW (with $\mathbb{E}\{|s[k]|^{2}\}=P_{t}/W$)
and $K+1=257$ subcarriers with QPSK pilots. The UE has bias $B=20$ m. The array has $N+1=129$ elements spaced at $\lambda/2$,
corresponding to a total size of 69.11 cm and a far-field distance
of 89 m.\footnote{Source code is available at \url{https://tinyurl.com/y3jybhdp}.}

\subsection{Fisher Information }

We will evaluate the position error bound (PEB\footnote{The PEB is defined from the $4 \times 4$ FIM $\mathbf{J}(\psi,\mathbf{x},B)$ as
$\sqrt{\text{trace}[\mathbf{J}^{-1}(\psi,\mathbf{x},B)]_{2:3,2:3}}$
and is expressed in m. }) for several models: the general (correct) model (\ref{eq:basicSignalModelOmega}), and three approximate models: the standard model (\ref{eq:SimplifiedModel}), the narrowband
near-field model (\ref{eq:NarrowbandNearField}), and the spatial wideband
far-field model (\ref{eq:WidebandFarField}), for known and unknown clock bias $B$. In Fig.~\ref{fig:PEBdistance}
we show the PEB as as a function of the distance $d$.
As expected, the far-field model is correct for distances
larger than about 8 m, while for shorter distances the near-field models
provide lower PEB. Moreover, at short distances, the PEB of the general model does not
depend on whether we know $B$, while for large distances, the PEB
quickly increases when $B$ is unknown. 
The results clearly show that joint synchronization and positioning in near-field can give good performance. 

In Fig.~\ref{fig:PEBDelta}, we show the PEB as a function of the
inter-antenna spacing. 
In this case, the PEB under the standard model does not depend on
$\Delta$, as it is mainly limited by the estimation of the distance.
The general model leads to lower PEB for large antenna spacing, and
larger PEB for small antenna spacing (for the case of unknown $B$).
Note that for very large $\Delta$, the PEB of the general model increases
due to the path loss. For both figures, we see that the
main benefit for small $d$ or large $\Delta$ comes from the near-field
information, not the spatial wideband information.

\begin{figure}
\begin{centering}
\includegraphics[width=1\columnwidth]{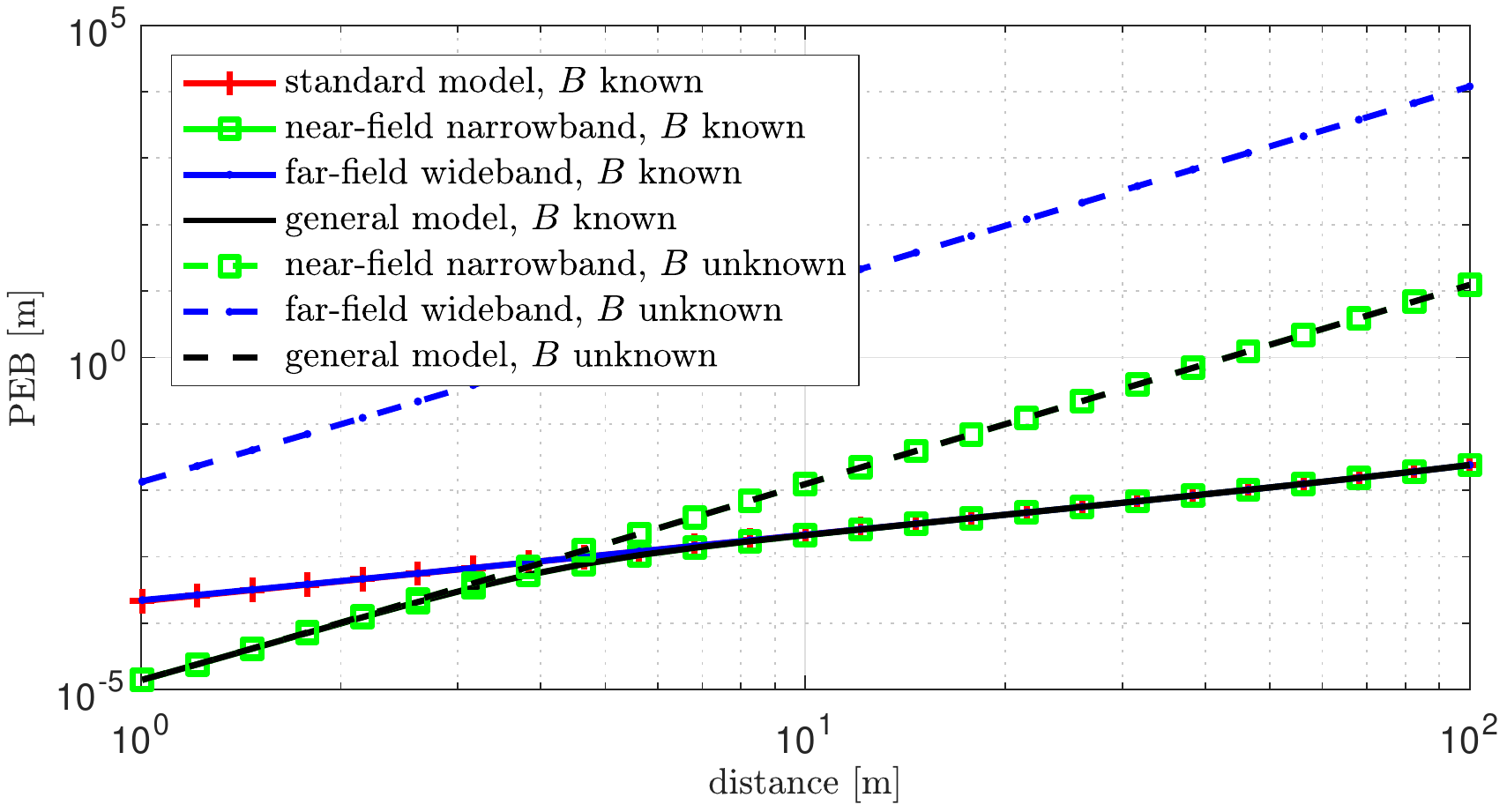}
\par\end{centering}
\caption{\label{fig:PEBdistance}PEB as a function of UE distance for known
and unknown clock bias $B$, with $\Delta=\lambda/2$. }
\end{figure}
\begin{figure}
\begin{centering}
\includegraphics[width=1\columnwidth]{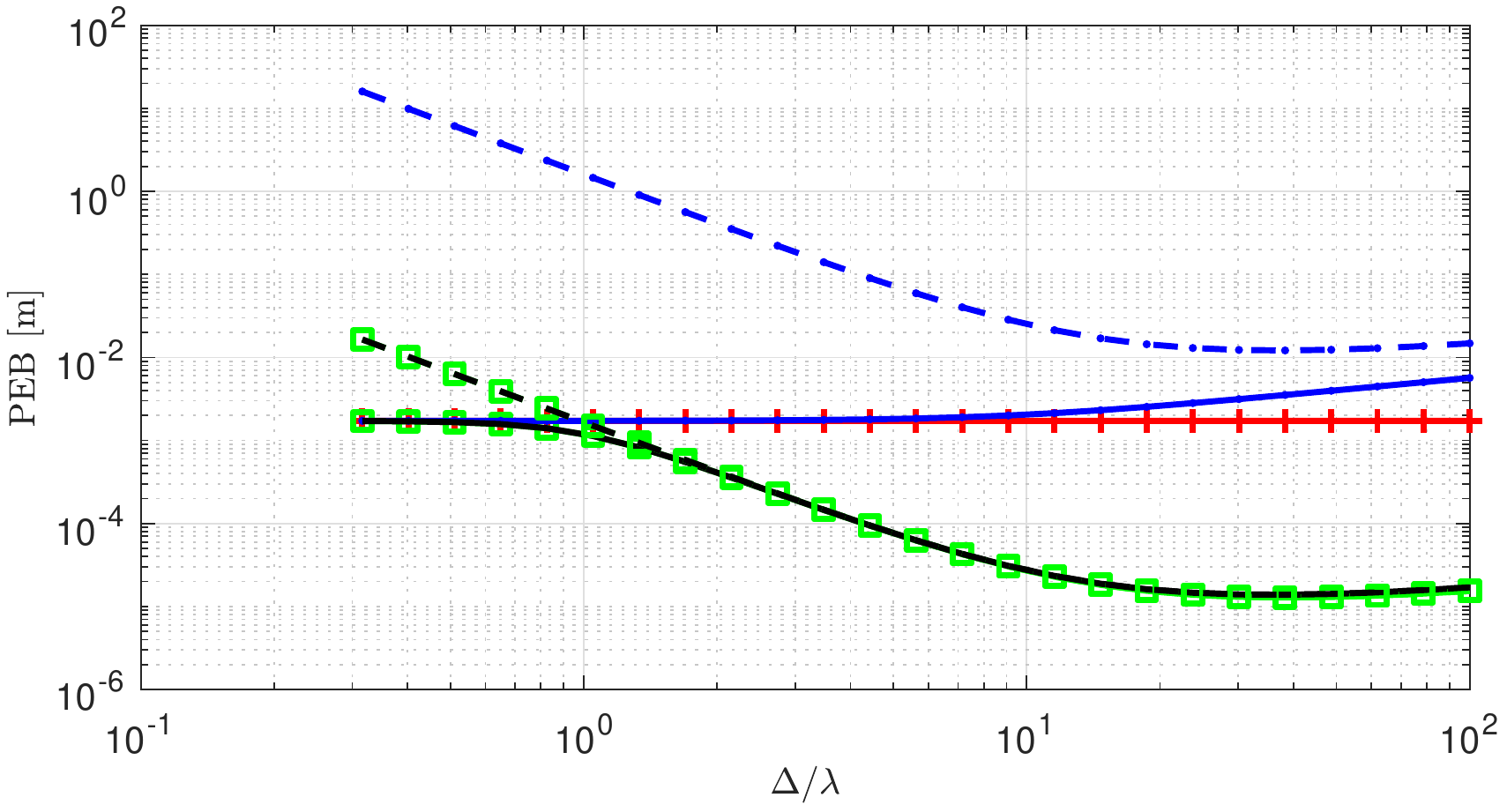}
\par\end{centering}
\caption{\label{fig:PEBDelta}PEB as a function of inter-antenna spacing for
known and unknown clock bias $B$, with $\mathbf{x}=[1\,\text{m},\,8\,\text{m}]^{\text{T}}$.
The legend is the same as in Fig.~\ref{fig:PEBdistance}.}
\end{figure}

\subsection{Algorithm}\label{sec:simsAlg}

We now evaluate the performance of the method described in Section
\ref{subsec:Algorithm-1}, whereby the AOAs are computed using a 2DFFT
with $10N$ points in the spatial domain and $K+1$ points in the
frequency domain. \EC{We vary $d$ with random $\theta \sim \mathcal{U}(\pi/4,3\pi/4)$ and place a scatterer with radar cross section of $10~\text{m}^2$ uniformly in the plane (this corresponds to a large scattering object). This enables us to evaluate the robustness to multipath. } For comparison purposes, the method from Section \ref{subsec:Algorithm} is also evaluated, assuming known bias. 
From Fig.~\ref{fig:Joint-localization-and}, we observe low position RMSE for distances
below 3 m. \Rb{The non-LOS (NLOS) path increases the RMSE compared to the LOS-only case, as it causes large outliers. Note that multipath appears as a second peak in the 2D-FFT and can thus be recovered and separated from the LOS path. However, which path corresponds to LOS is often harder to determine due to the poor delay resolution  (the resolution at $W=100$ MHz is only 3 m). }  Beyond 15 m distance, $\tilde{N} \to 1$,
so that the problem is no longer identifiable. The localization performance
is worse than the PEB from Fig.~\ref{fig:PEBdistance}, as the method
has not been optimized for accuracy. Moreover, the bias
estimate has orders of magnitude larger error, as it is based on low-quality
range estimates. This error can be further reduced by using larger FFTs along the frequency
dimension. The far-field method from Section \ref{subsec:Algorithm} with known bias is limited by the bandwidth and thus leads to worse performance for all $d$.
\begin{figure}
\includegraphics[width=1\columnwidth]{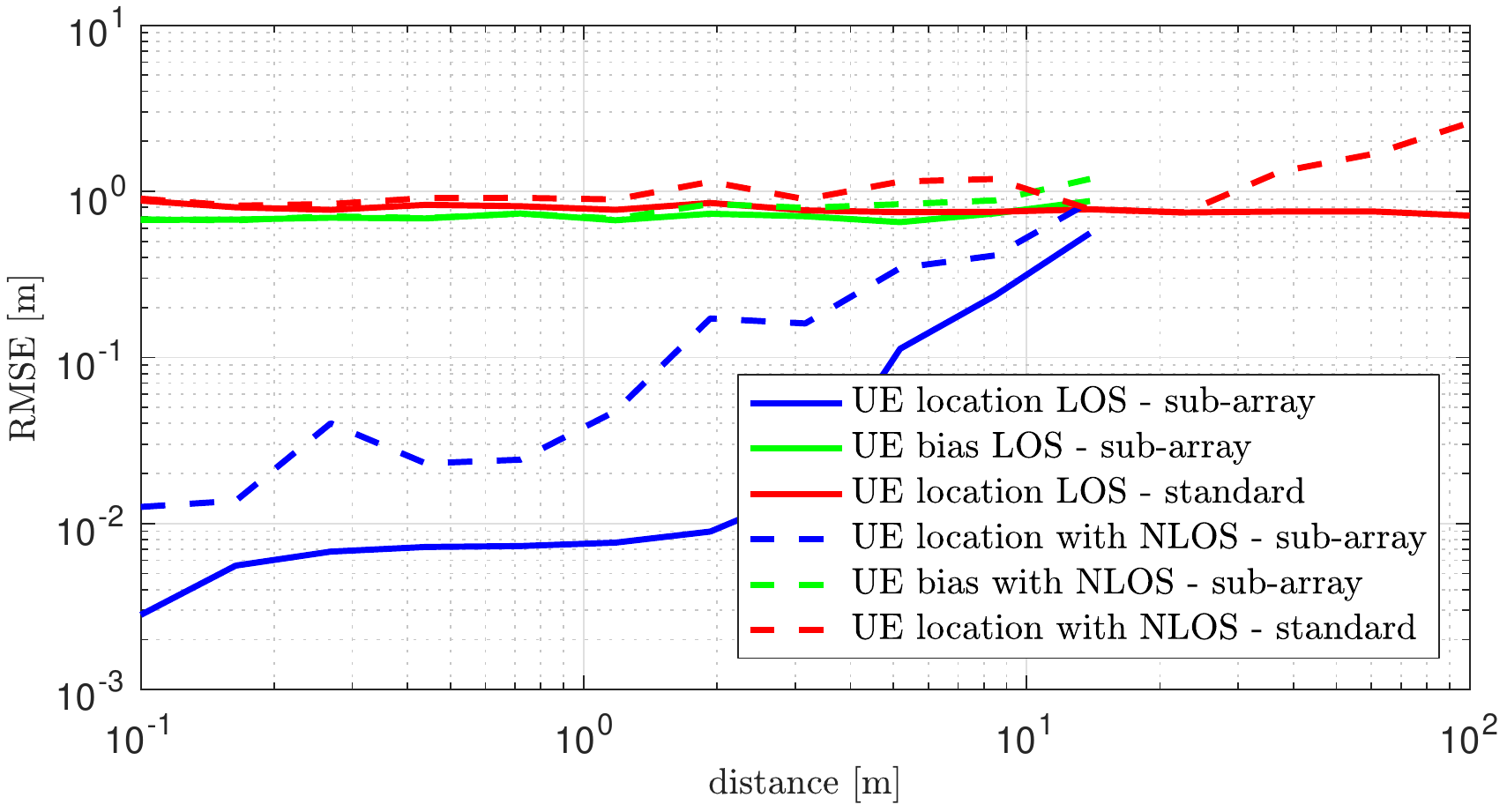}
\caption{\label{fig:Joint-localization-and}\Rb{Joint localization and synchronization
performance using sub-array processing vs standard processing with LOS only and with multipath (LOS + NLOS).} Bounds are not included, as they are loose since the algorithm is developed for proof-of-concept, not for ultimate accuracy. }
\end{figure}

\section{Conclusions}

When large arrays are used for positioning, near-field 
propagation must be taken into account. This presents challenges and
opportunities for the development of localization systems beyond 5G.
We have performed a Fisher information analysis and proposed a simple
joint localization and synchronization method for this regime. Our
results show that near-field propagation can be exploited in uplink and that the Fisher information provided from wavefront curvature is more pronounced than from spatial wideband. Immediate suggestions for future research are the inclusion of hybrid
combining at the BS, as in \cite{myers_message_2019}, the study of
downlink localization with a single receive antenna \cite{Fascista2019},
as well as the inclusion of \Rb{a more realistic propagation model \cite{Friedlander2019},
accounting for  coupling \cite{svantesson2001antennas}} and electromagnetic theory, as well as \Rd{jointly localizing and synchronizing multiple mobile users \cite{win2018theoretical}. }

\section*{Acknowledgment}
This research was supported, in part, by the Swedish Research Council under grant No.~2018-03701. 

\bibliographystyle{ieeetr}
\bibliography{ILSBibliography,references}

\begin{thebibliography}{10}

\bibitem{delPeralCOMST17}
J.~A. {del Peral-Rosado}, R.~{Raulefs}, J.~A. {L{\'o}pez-Salcedo}, and
  G.~{Seco-Granados}, ``Survey of cellular mobile radio localization methods:
  From {1G to 5G},'' {\em IEEE Communications Surveys Tutorials}, vol.~20,
  pp.~1124--1148, Secondquarter 2018.

\bibitem{WymSecDesDarTuf:J18}
H.~Wymeersch, G.~Seco-Granados, G.~Destino, D.~Dardari, and F.~Tufvesson,
  ``{5G} mm-{W}ave positioning for vehicular networks,'' {\em Wireless Commun.
  Mag.}, vol.~24, pp.~80--86, Dec. 2018.

\bibitem{buehrer2018collaborative}
R.~M. Buehrer, H.~Wymeersch, and R.~M. Vaghefi, ``Collaborative sensor network
  localization: Algorithms and practical issues,'' {\em Proceedings of the
  IEEE}, vol.~106, no.~6, pp.~1089--1114, 2018.

\bibitem{garcia2017direct}
N.~Garcia, H.~Wymeersch, E.~G. Larsson, A.~M. Haimovich, and M.~Coulon,
  ``Direct localization for massive {MIMO},'' {\em IEEE Transactions on Signal
  Processing}, vol.~65, no.~10, pp.~2475--2487, 2017.

\bibitem{Shahmansoori2018}
A.~Shahmansoori, G.~E. Garcia, G.~Destino, G.~Seco-Granados, and H.~Wymeersch,
  ``Position and orientation estimation through millimeter-wave {MIMO} in 5{G}
  systems,'' {\em IEEE Trans. Wireless Commun.}, vol.~17, pp.~1822--1835, Mar.
  2018.

\bibitem{han2016performance}
Y.~Han, Y.~Shen, X.-P. Zhang, M.~Z. Win, and H.~Meng, ``Performance limits and
  geometric properties of array localization,'' {\em IEEE Transactions on
  Information Theory}, vol.~62, no.~2, pp.~1054--1075, 2016.

\bibitem{shen2010accuracy}
Y.~Shen and M.~Z. Win, ``On the accuracy of localization systems using wideband
  antenna arrays,'' {\em IEEE Transactions on Communications}, vol.~58, no.~1,
  pp.~270--280, 2010.

\bibitem{Shen2010}
Y.~Shen and M.~Z. Win, ``Fundamental limits of wideband localization -- part
  {I}: A general framework,'' {\em IEEE Transactions on Information Theory},
  vol.~56, pp.~4956--4980, 10 2010.

\bibitem{witrisal2016high}
K.~Witrisal, P.~Meissner, E.~Leitinger, Y.~Shen, C.~Gustafson, F.~Tufvesson,
  K.~Haneda, D.~Dardari, A.~F. Molisch, A.~Conti, {\em et~al.}, ``High-accuracy
  localization for assisted living: 5{G} systems will turn multipath channels
  from foe to friend,'' {\em IEEE Signal Processing Magazine}, vol.~33, no.~2,
  pp.~59--70, 2016.

\bibitem{bjornson2019massive}
E.~Bj{\"o}rnson, L.~Sanguinetti, H.~Wymeersch, J.~Hoydis, and T.~L. Marzetta,
  ``Massive mimo is a reality--what is next?: Five promising research
  directions for antenna arrays,'' {\em Digital Signal Processing}, 2019.

\bibitem{myers_message_2019}
N.~J. Myers, J.~Kaleva, A.~T\"{o}lli, and R.~W. Heath~Jr, ``Message
  passing-based link configuration in short range millimeter wave systems,''
  {\em arXiv:1907.05009 [cs, eess, math]}, July 2019.
\newblock arXiv: 1907.05009.

\bibitem{basar_wireless_2019}
E.~Basar, M.~Di~Renzo, J.~de~Rosny, M.~Debbah, M.-S. Alouini, and R.~Zhang,
  ``Wireless {Communications} {Through} {Reconfigurable} {Intelligent}
  {Surfaces},'' {\em arXiv:1906.09490 [cs, eess, math]}, June 2019.
\newblock arXiv: 1906.09490.

\bibitem{hu_beyond_2018}
S.~Hu, F.~Rusek, and O.~Edfors, ``Beyond {Massive} {MIMO}: {The} {Potential} of
  {Positioning} {With} {Large} {Intelligent} {Surfaces},'' {\em IEEE
  Transactions on Signal Processing}, vol.~66, pp.~1761--1774, Apr. 2018.

\bibitem{rockah1987array}
Y.~Rockah and P.~Schultheiss, ``Array shape calibration using sources in
  unknown locations--part ii: Near-field sources and estimator
  implementation,'' {\em IEEE Transactions on Acoustics, Speech, and Signal
  Processing}, vol.~35, no.~6, pp.~724--735, 1987.

\bibitem{huang1991near}
Y.-D. Huang and M.~Barkat, ``Near-field multiple source localization by passive
  sensor array,'' {\em IEEE Transactions on antennas and propagation}, vol.~39,
  no.~7, pp.~968--975, 1991.

\bibitem{yuen_performance_1998}
N.~Yuen and B.~Friedlander, ``Performance analysis of higher order {ESPRIT} for
  localization of near-field sources,'' {\em IEEE Transactions on Signal
  Processing}, vol.~46, pp.~709--719, Mar. 1998.

\bibitem{gazzah2014crb}
H.~Gazzah and J.~P. Delmas, ``{CRB}-based design of linear antenna arrays for
  near-field source localization,'' {\em IEEE Transactions on Antennas and
  Propagation}, vol.~62, no.~4, pp.~1965--1974, 2014.

\bibitem{zhi2007near}
W.~{Zhi} and M.~Y. {Chia}, ``Near-field source localization via symmetric
  subarrays,'' {\em IEEE Signal Processing Letters}, vol.~14, pp.~409--412,
  June 2007.

\bibitem{el2010conditional}
M.~N. El~Korso, R.~Boyer, A.~Renaux, and S.~Marcos, ``Conditional and
  unconditional {Cram{\'e}r--Rao} bounds for near-field source localization,''
  {\em IEEE Transactions on Signal Processing}, vol.~58, no.~5, pp.~2901--2907,
  2010.

\bibitem{hu2014near}
K.~Hu, S.~P. Chepuri, and G.~Leus, ``Near-field source localization: Sparse
  recovery techniques and grid matching,'' in {\em 2014 IEEE 8th Sensor Array
  and Multichannel Signal Processing Workshop (SAM)}, pp.~369--372, IEEE, 2014.

\bibitem{zuo2018subspace}
W.~Zuo, J.~Xin, N.~Zheng, and A.~Sano, ``Subspace-based localization of
  far-field and near-field signals without eigendecomposition,'' {\em IEEE
  Transactions on Signal Processing}, vol.~66, no.~17, pp.~4461--4476, 2018.

\bibitem{Zhang19Spherical}
S.~{Zhang}, T.~{Jost}, R.~{P\"{o}hlmann}, A.~{Dammann}, D.~{Shutin}, and P.~A.
  {Hoeher}, ``Spherical wave positioning based on curvature of arrival by an
  antenna array,'' {\em IEEE Wireless Communications Letters}, vol.~8,
  pp.~504--507, April 2019.

\bibitem{chen_maximum-likelihood_2002}
J.~Chen, R.~Hudson, and {Kung Yao}, ``Maximum-likelihood source localization
  and unknown sensor location estimation for wideband signals in the
  near-field,'' {\em IEEE Transactions on Signal Processing}, vol.~50,
  pp.~1843--1854, Aug. 2002.

\bibitem{mada_efficient_2009}
K.~Mada, {Hsiao-Chun Wu}, and S.~Iyengar, ``Efficient and {Robust} {EM}
  {Algorithm} for {Multiple} {Wideband} {Source} {Localization},'' {\em IEEE
  Transactions on Vehicular Technology}, vol.~58, pp.~3071--3075, July 2009.

\bibitem{win2018theoretical}
M.~Z. Win, Y.~Shen, and W.~Dai, ``A theoretical foundation of network
  localization and navigation,'' {\em Proceedings of the IEEE}, vol.~106,
  no.~7, pp.~1136--1165, 2018.

\bibitem{DarConFerGioWin:J09}
D.~Dardari, A.~Conti, U.~Ferner, A.~Giorgetti, and M.~Z. Win, ``Ranging with
  ultrawide bandwidth signals in multipath environments,'' {\em Proceedings of
  the IEEE}, vol.~97, pp.~404--426, Feb 2009.

\bibitem{Kay1993}
S.~Kay, {\em Fundamentals of{ Statistical Signal Processing: Estimation
  Theory}}.
\newblock Prentice Hall Signal Processing Series, 1993.

\bibitem{Rusek_MSP2013}
F.~Rusek, D.~Persson, B.~K. Lau, E.~G. Larsson, T.~L. Marzetta, O.~Edfors, and
  F.~Tufvesson, ``{Scaling Up MIMO: Opportunities and Challenges with Very
  Large Arrays},'' {\em IEEE Signal Processing Magazine}, vol.~30, pp.~40--60,
  Jan. 2013.

\bibitem{Fascista2019}
A.~{Fascista}, A.~{Coluccia}, H.~{Wymeersch}, and G.~{Seco-Granados},
  ``Millimeter-wave downlink positioning with a single-antenna receiver,'' {\em
  IEEE Transactions on Wireless Communications}, pp.~1--1, 2019.

\bibitem{Friedlander2019}
B.~{Friedlander}, ``Localization of signals in the near-field of an antenna
  array,'' {\em IEEE Transactions on Signal Processing}, vol.~67,
  pp.~3885--3893, Aug 2019.

\bibitem{svantesson2001antennas}
T.~Svantesson, {\em Antennas and propagation from a signal processing
  perspective}.
\newblock PhD thesis, Chalmers University of Technology, Sweden, 2001.

\end{thebibliography}

\end{document}